\documentclass[11pt,draftcls]{IEEEtran}

\usepackage[mathcal]{euscript}
\usepackage{bbm,dsfont,mathrsfs,fixmath}
\usepackage{stmaryrd}
\usepackage{amsmath,amstext,amsfonts,amssymb}
\usepackage{epsfig,float}
\usepackage{graphicx}
\usepackage{cite}
\usepackage{psfrag}
\usepackage{subfigure}
\usepackage{url}
\usepackage{shadow,color,pifont,times,rotate}

\newcommand{\mb}{\mathbf}

\newcommand{\mc}{\mathcal}

\def\ex{\times }

\def\esp{{\mathbb E}}
\newtheorem{theorem}{Theorem}[section]
\newtheorem{lemma}[theorem]{Lemma}

\newtheorem{remark}[theorem]{Remark}
\newtheorem{definition}[theorem]{Definition}
\onecolumn

\begin{document}
 \footernote{To appear in Proc. of IEEE International Symposium on Information Theory (ISIT2010).}
\title{On the Capacity of Compound State-Dependent\\[-2mm] Channels with States Known at the Transmitter}

\author{
\authorblockN{Pablo Piantanida}\\
\authorblockA{Department of Telecommunications, SUPELEC\\
Plateau de Moulon, 91192 Gif-sur-Yvette, France\\
Email: pablo.piantanida@supelec.fr}\\[3mm]
\and
\authorblockN{Shlomo Shamai (Shitz)}\\ 
\authorblockA{Department of Electrical Engineering, Technion - Israel Institute of Technology\\
Technion city, Haifa 32000, Israel\\
Email: sshlomo@ee.technion.ac.il}
}

\maketitle

\begin{abstract}
This paper investigates the capacity of compound state-dependent channels with non-causal state information available at only the transmitter. A new lower bound on the capacity of this class of channels is derived. This bound is shown to be tight for the special case of compound channels with stochastic degraded components, yielding the full characterization of the capacity. Specific results are derived for the compound Gaussian Dirty-Paper (GDP) channel. This model consists of an additive white Gaussian noise (AWGN) channel corrupted by an additive Gaussian interfering signal, known at the transmitter only, where the input and the state signals are affected by fading coefficients whose realizations are unknown at the transmitter.  Our bounds are shown to be tight for specific cases. Applications of these results arise in a variety of wireless scenarios as multicast channels, cognitive radio and  problems with interference cancellation.
\end{abstract}

\section{Introduction}
In the recent years, intensive research addressing theoretical and practical aspects was undertaken on communications over channels controlled by random parameters, namely states. Gel'fand and Pinsker \cite{GF-1980} derived the capacity expression for discrete memoryless channels (DMCs), where the i.i.d. state sequence is known at the transmitter before the start of the communication, but not at the receiver. This scenario is known as state-dependent DMCs with non-causal state information. Costa \cite{costa-1983} considered the case of an additive white Gaussian noise (AWGN) channel corrupted by an additive Gaussian interference which is available at the transmitter only. He showed that choosing an appropriate  probability distribution (PD) for the auxiliary random variable (RV) and the state, referred to as \emph{Dirty-Paper Coding} (DPC), there is no loss in capacity if the interference is known only to the encoder. This result has gained considerable attention because of its potential use to mitigate the interference effects in multi-user scenarios.     

In this work we focus on the compound state-dependent channel with non-causal state information at the transmitter. This channel arises in scenarios where there is uncertainty on the channel statistic. In this model, the conditional PD of the channel is parameterized by $\theta$, which belongs to an arbitrary set $\Theta$ and remains constant during the communication. Whereas, neither the sender nor the receiver  are recognizant of the realization  $\theta$ that governs the communication. This problem was initially investigated in \cite{Mitran-Devroye-Tarokh2006}, where lower and upper bounds on the capacity were derived. In \cite{Steinberg-Shamai-2005}, this problem is identified as being equivalent to the \emph{common-message} broadcast channel (BC) with non-causal state information at the transmitter. Moreover in \cite{Khisti-Lapidoth-Wornell2007}, this is recognized to be the \emph{multicast channel}. Results were obtained for AWGN and binary channels, where a transmitter sends a common message to multiple receivers and each of them experiences an additive interference available at the transmitter only.  These channels are of great interest because of their role in multi-user channels and in particular, for the emerging field of cognitive radios.  Recent work in \cite{Maric-Goldsmith-Kramer-Shamai2008} investigated the capacity of this framework, which is essentially related to the problem considered here when the cognitive user is unaware of the channel path gains. Broadcast channels with imperfect channel knowledge are also instances of this class of channels (cf. \cite{Erez-Shamai-Zamir2005} and \cite{Weingarten-shamai07}).  

In prior work \cite{Piantanida-shamai2009}, \cite{Moulin-Wang07}, it was claimed that a strong converse establishes the optimality of the lower bound first derived in \cite{Mitran-Devroye-Tarokh2006}. In this paper we will demonstrate that this is not the case in general.  In fact, the rate expression \eqref{capacity-low} that was conjectured to be optimal for the general compound channel with states corresponds to the natural extension of the capacity expression obtained by Gel'fand and Pinsker's \cite{GF-1980} to the compound setting case.  Here we establish a new lower bound on the capacity of this class of channels that  can outperform the previous lower bound. This  bound is based on a non-conventional approach \cite{Nair-ELGAmal2009}, \cite{ElGamal09} via a broadcasting strategy that allows the encoder to adapt the auxiliary RVs to each of possible channel outcomes (or each of different users in the multicast setting). Finally, we specialize this bound to the compound Gaussian Dirty-Paper (GDP) channel and derive an upper bound which is tight for some compound models. Furthermore, we show that our lower bound is tight for the compound channel with stochastic degraded components. Recent independent efforts deriving similar results are reported in \cite{Chandra-ElGamal10}, where explicit examples demonstrate also that the rate in expression  \eqref{capacity-low} can be surpassed. The organization of this paper is as follows. Definitions and main results are stated in Section \ref{sec-definitions-results}, while the proof outline and an application to the compound GDP channel are given in Sections \ref{sec-proof} and  \ref{sec-example}.

\section{Problem Statement and Main Results}\label{sec-definitions-results}

In this section, we  introduce main definitions, formalize the problem and present lower and upper bounds on the capacity. 

\subsection{Definitions and Problem Statement}\label{sec-definitions}

We begin with the description of an arbitrary family of channels with discrete input $x\in \mathscr{X}$, discrete state $s\in \mathscr{S}$ and discrete output $y\in \mathscr{Y}$, which is characterized by a set of conditional probability distributions (PDs) $\mc{W}_\Theta =\big\{ W_\theta:\mathscr{X} \ex \mathscr{S}  \longmapsto \mathscr{Y}\big\}_{\theta\in \Theta}$, indexed by $\theta\in \Theta$, where $\Theta$ is the set  of indexes (assumed to be finite). The transition PD of the $n$-memoryless extension  with inputs $\mb{x}=(x_1,\dots,x_n)$, states $\mb{s}=(s_1,\dots,s_n)$ and outputs $\mb{y}=(y_1,\dots,y_n)$ is given by     
\begin{equation}
W_\theta^n(\mb{y}|\mb{x},\mb{s})=\prod\limits_{i=1}^n  W_\theta (y_i|x_i,s_i).         \label{def-compound}   
\end{equation}
The sequence of states $\mb{s}$ is assumed to be drawn i.i.d. with PD $P_S$. The encoder is assumed to know the sequence of states before the transmission starts, but the decoder does not know it. Whereas, neither the sender nor the receiver  are cognizant of the  realization of $\theta$ that governs the communication. The channel states change from letter to letter following the PD $P_S$, but $\theta\in\Theta$ should not change during the communication. This scenario is known as compound DMCs with non-causal state information at the transmitter. We argue  the capacity is not increased if the decoder is aware of the index $\theta\in\Theta$. 

\begin{definition}[Code] \label{def-code}
A code for this channel consists of two mappings, the encoder mapping $\big\{ \varphi:\mc{M}_n \ex \mathscr{S}^n \longmapsto \mathscr{X}^n  \big\}$ and the decoder mapping $\big\{ \psi:\mathscr{Y}^n \longmapsto \mc{M}_n  \big\}$ for some finite set of integers $\mc{M}_n=\big\{ 1,\dots, M_n \big\}$.  The encoding function $\big\{ \varphi \big\}$ maps the corresponding message $m\in\mc{M}_n $ and the states $\mathscr{S}^n$ into $\mathscr{X}^n$ and the decoding function $\big\{ \psi \big\}$ maps $\mathscr{Y}^n$ into $\mc{M}_n$.  In presence of feeback, where the past of the channel outputs are available at the transmitter, the encoder mappings are given by $\big\{ \varphi_i:\mc{M}_n \ex \mathscr{Y}^{i-1} \ex \mathscr{S}^n \longmapsto \mathscr{X}  \big\}_{i=1}^n$.
An $n$-length block code for simultaneous DMCs $\big\{ W_\theta^n:\mathscr{X}^n \ex \mathscr{S}^n  \longmapsto \mathscr{Y}^n, \theta\in \Theta\big\}_ {n=1}^{\infty}$ consists on a common code $(\varphi,\psi)$ for the set of channels $ \mc{W}_{\Theta}^n=\big\{ W_\theta^n \big\}_{\theta\in \Theta}$. The rate of such code is $n^{-1} \log M_n$ and its error probability associated to the message $m\in \mc{M}_n$ is  defined as  
\begin{equation*}
e_{m}^{(n)}\big( W_{\theta}^n, \varphi,\psi| \mb{s} \big) = W^n_\theta\big( \bigcup_{m^{\prime}\neq m}   \psi^{-1}(m^{\prime}) \big | \varphi(m,\mb{s}), \mb{s}   \big) ,\label{errorprob_def}
\end{equation*}
for $\theta\in\Theta$ and $\mb{s}\in  \mathscr{S}^n$. The maximum of average error probability (over all messages) is defined as  
\begin{equation*}
\bar{e}_{\max}^{(n)}\big( \mc{W}_{\Theta}^n, \varphi,\psi\big)=\max_{m\in\mc{M}}\, \max_{\theta\in \Theta} \,\sum\limits_{\mb{s}\in  \mathscr{S}^n} P^n_S(\mb{s}) \, e_{m}^{(n)}\big(W_{\theta}^n, \varphi,\psi | \mb{s}\big).\label{mean_errorprob_def}
\end{equation*}
An $n$-length block code for the simultaneous DMCs $\mc{W}_{\Theta}^n$ whose maximum of average error probability \eqref{mean_errorprob_def} satisfies $\bar{e}_{\max}^{(n)}\big( \mc{W}_{\Theta}^n, \varphi,\psi\big)\leq \epsilon$ will be called an $(n,\epsilon)$-code.          
\end{definition}

\begin{definition}[Achievable rate and capacity] Given $0< \epsilon,\gamma< 1$, a non-negative number $R$ is an $\epsilon$-achievable rate for the compound channel $\mc{W}_{\Theta}$ if for every sufficiently large $n$ there exist $(n,\epsilon)$-codes of rate 
$n^{-1} \log M_n \geq R-\gamma.$ Then, $R$ is an achievable rate if it is $\epsilon$-achievable for every $0<\epsilon<1$. The supremum of $\epsilon$-achievable rates is called the $\epsilon$-capacity $C_\epsilon$ while the supremum of achievable rates is called the capacity.
\end{definition}

In the remainder of this section we state lower and upper bounds on the capacity of the general compound DMC \eqref{def-compound}. \vspace{-1mm}

\subsection{Lower Bounds on the Capacity}\label{chI-sectionII}

The following achievable rate, first found in \cite{Mitran-Devroye-Tarokh2006}, corresponds to the straightforward extension of the Gel'fand and Pinsker's capacity \cite{GF-1980} to the compound setting case.  

\begin{theorem}\label{theo-capacity-low}
A lower bound on the capacity of the compound DMC $\big\{ W_\theta:\mathscr{X} \ex \mathscr{S}  \longmapsto \mathscr{Y}\big\}_{\theta\in \Theta}$ with states non-causally known only at the transmitter  is given by 
\begin{equation}
R = \sup\limits_{P_{XU|S} \in\mathscr{Q}}\, \min_{\theta\in \Theta } \big\{I(U;Y_\theta)-I(U;S) \big\}, 
\label{capacity-low}
\end{equation}
where $ U \minuso (X,S) \minuso Y_\theta$ for all $\theta\in\Theta$ and the set of admissible input PDs is defined as follows $\mathscr{Q}=\big \{P_{XU|S} \in  \mathscr{P}( \mathscr{X} \ex \mathscr{U}):  P_{XU|S} = P_{X|US} P_{U|S}$,  $\| \mathscr{U} \|\leq \|\mathscr{X} \| \|\mathscr{S} \|+\| \Theta \| \big \}$.
\end{theorem}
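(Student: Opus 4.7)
The plan is to lift the Gel'fand--Pinsker binning scheme to the compound setting, exploiting the fact that both transmitter and receiver are ignorant of $\theta$ and that $\Theta$ is finite. Since the encoder must use a single codebook that is simultaneously good for every channel in $\mc{W}_\Theta$, and since (as noted above) granting the decoder genie knowledge of $\theta$ cannot increase the capacity, I would prove achievability of any $R<\min_\theta[I(U;Y_\theta)-I(U;S)]$ by fixing one admissible $P_{XU|S}\in\mathscr{Q}$ and designing a code whose decoding analysis can be carried out channel by channel.

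Concretely, I would fix $P_{XU|S}\in \mathscr{Q}$, set $P_U(u)=\sum_s P_S(s)P_{U|S}(u|s)$, and generate $2^{nR'}$ sequences $\mb{u}(w)$ i.i.d.\ from $\prod_i P_U(u_i)$ with $R'=R+I(U;S)+\eta$, partitioning them uniformly into $M_n=2^{nR}$ bins $\mc{B}(m)$. For encoding, given $(m,\mb{s})$ I would select any $\mb{u}(w)\in\mc{B}(m)$ jointly $\delta$-typical with $\mb{s}$ under $P_{US}$ and transmit $\mb{x}$ drawn symbol-wise from $P_{X|US}$; the covering lemma guarantees success once $R'-R>I(U;S)+\epsilon$. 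For decoding under channel realization $\theta$, declare $\hat m$ to be the unique index such that some $\mb{u}(w)\in\mc{B}(\hat m)$ is jointly $\delta'$-typical with $\mb{y}$ under $P_{UY_\theta}$; the packing lemma applied to the $2^{nR'}-1$ non-selected codewords drives the error probability to zero whenever $R'<I(U;Y_\theta)-\epsilon'$. Combining these and taking the minimum over the finite set $\Theta$ gives $R<\min_\theta[I(U;Y_\theta)-I(U;S)]-2\epsilon$, and $\epsilon\downarrow 0$ completes the achievability step. If one wishes to dispense with the genie-$\theta$ argument, the decoder can instead enumerate $\theta$ inside the typicality test; the extra union bound costs only $n^{-1}\log|\Theta|\to 0$ in rate because $|\Theta|<\infty$.

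The cardinality bound $\|\mathscr{U}\|\le \|\mathscr{X}\|\|\mathscr{S}\|+\|\Theta\|$ follows from the Fenchel--Eggleston--Carath\'eodory support lemma: it suffices to preserve the $\|\mathscr{X}\|\|\mathscr{S}\|-1$ linear functionals that fix $P_{XS}$ (and thereby each $H(S)$ and $H(Y_\theta)$) together with the $|\Theta|$ scalars $\esp_U[H(Y_\theta|U)-H(S|U)]$, so a replacement $U$ on a support of size $\|\mathscr{X}\|\|\mathscr{S}\|+\|\Theta\|$ reproduces every $I(U;Y_\theta)-I(U;S)$ while keeping the admissible class intact. The step I expect to be the most delicate is verifying that a single realization of the random code is simultaneously good for all $\theta$; this is mild here because $\Theta$ is finite, but in the absence of that finiteness one would have to replace joint typicality by a universal decoder (for instance an MMI rule), which is considerably more involved.
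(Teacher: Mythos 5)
Your proposal is correct and is the standard Gel'fand--Pinsker binning argument lifted to the compound setting; the paper itself does not prove Theorem \ref{theo-capacity-low} (it cites \cite{Mitran-Devroye-Tarokh2006}), but your scheme is precisely the degenerate case $V_1=\dots=V_K=U$ of the superposition construction used for Theorems \ref{theo-new-lowerbound1} and \ref{theo-new-lowerbound2}, with the same covering/packing constraints. One caution: the genie argument runs in the wrong direction for achievability (a $\theta$-aware decoder is a relaxation, so its success says nothing about the actual decoder), so your fallback --- enumerating $\theta$ inside the typicality test at a vanishing $n^{-1}\log\|\Theta\|$ union-bound cost, which works because the packing bound $2^{-n(I(U;Y_{\theta'})-\delta)}$ holds for every fixed $\mb{y}$ --- should be the primary decoder rather than a footnote.
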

Notice that if the encoder is unaware of the states, i.e., $(X,U)$ must be independent of $S$, expression \eqref{capacity-low} reduces to the capacity of standard compound DMCs \cite{Lapidoth-1998}. 

We next state a new achievable rate that improves \eqref{capacity-low}. For sake of clarity, we first consider the case of two components $\Theta=\{1,2\}$ and then we generalize this to an arbitrary set $\Theta$.

\begin{theorem}\label{theo-new-lowerbound1}
A lower bound on the capacity of the compound DMC $\big\{ W_1, W_2:\mathscr{X} \ex \mathscr{S}  \longmapsto \mathscr{Y}\big\}$ with states non-causally known only at the transmitter  is given by 
\begin{align} 
R = \sup\,  \min \big\{ &I(U,V_1;Y_1)-I(U,V_1;S),\nonumber\\ 
& I(U,V_2;Y_2)-I(U,V_2;S),\nonumber\\
\frac{1}{2}\big[& I(U,V_1;Y_1)-I(U,V_1;S) +  \nonumber\\
I(U,V_2;Y_2)-& I(U,V_2;S) - I(V_1;V_2|U,S) \big]\big\}, 
\label{capacity-new-lowerbound1}
\end{align} 
where the supremum  is taken over the set of all joint PDs $P_{XUV_1V_2|S} = P_{X|UV_1V_2S} P_{UV_1V_2|S} $ that satisfy $ (U,V_1,V_2) \minuso (X,S) \minuso (Y_1,Y_2)$ form a Markov chain.
\end{theorem}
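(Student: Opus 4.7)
The plan is to combine Gel'fand--Pinsker binning with a Marton-type mutual covering, viewing the two channels $W_1,W_2$ as two ``virtual receivers'' that must each decode the single message. A common auxiliary $U$ carries the bulk of the message, while two receiver-specific refinements $V_1,V_2$ are generated by Marton covering and binned jointly against the state $S$, letting the encoder tailor its auxiliary signal to each channel outcome. I would fix any admissible joint distribution, factored as $P_{U|S}\,P_{V_1V_2|US}\,P_{X|UV_1V_2S}$, and build a two-layer random code.

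For each message $m\in\{1,\dots,2^{nR}\}$ I would first generate $2^{nR_0'}$ codewords $\mathbf{u}(m,j)$ i.i.d.\ $\sim\prod_i P_U(u_i)$, and then, superposed over every $\mathbf{u}(m,j)$ and independently of each other, $2^{nT_k}$ sequences $\mathbf{v}_k(m,j,l_k)$ with law $\prod_i P_{V_k|U}(\cdot|u_i)$ for $k=1,2$. On input $(m,\mathbf{s})$ the encoder first finds $j^\star$ making $(\mathbf{u}(m,j^\star),\mathbf{s})$ jointly $P_{US}$-typical, then an index pair $(l_1^\star,l_2^\star)$ making $(\mathbf{u}(m,j^\star),\mathbf{v}_1(m,j^\star,l_1^\star),\mathbf{v}_2(m,j^\star,l_2^\star),\mathbf{s})$ jointly typical, and finally transmits $\mathbf{x}$ drawn symbolwise from $P_{X|UV_1V_2S}$. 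Decoder $\theta\in\{1,2\}$ outputs the unique $\hat m$ for which some $(\hat j,\hat l_\theta)$ yields $(\mathbf{u}(\hat m,\hat j),\mathbf{v}_\theta(\hat m,\hat j,\hat l_\theta),\mathbf{y}_\theta)$ jointly $P_{UV_\theta Y_\theta}$-typical.

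The error analysis has two parts. Encoding succeeds with high probability as long as $R_0'>I(U;S)$ (standard Gel'fand--Pinsker binning) together with, by a Marton-type mutual covering in the presence of the outer state sequence, $T_k\geq I(V_k;S|U)$ and
\begin{equation*}
T_1+T_2\ \geq\ I(V_1;V_2|U)+I(V_1,V_2;S|U).
\end{equation*}
At decoder $\theta$ the joint-typicality packing lemma requires $R+R_0'+T_\theta<I(U,V_\theta;Y_\theta)$. Combining these with the identities
\begin{equation*}
I(V_1;V_2|U)+I(V_1,V_2;S|U)=I(V_1;S|U)+I(V_2;S|U)+I(V_1;V_2|U,S)
\end{equation*}
and $I(U,V_k;S)=I(U;S)+I(V_k;S|U)$, and eliminating the auxiliary rates $(R_0',T_1,T_2)$ by Fourier--Motzkin, reproduces exactly the three constraints in \eqref{capacity-new-lowerbound1}. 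The cardinality bounds on $\mathcal{U},\mathcal{V}_1,\mathcal{V}_2$ follow by standard Fenchel--Eggleston/Carath\'eodory reasoning.

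The hardest step will be establishing the Marton-type mutual covering with the extra correlated sequence $\mathbf{s}$. I would control it via a conditional second-moment argument on the number of jointly typical pairs $(\mathbf{v}_1(m,j^\star,\cdot),\mathbf{v}_2(m,j^\star,\cdot))$ matched against $(\mathbf{u}(m,j^\star),\mathbf{s})$, in direct analogy with the classical mutual-covering lemma of Marton but conditioned on the outer $(\mathbf{u},\mathbf{s})$-pair. This step produces the $I(V_1;V_2|U,S)$ penalty in the sum-rate bound and is what allows the new scheme to strictly improve upon the direct extension \eqref{capacity-low} of Gel'fand--Pinsker's result.
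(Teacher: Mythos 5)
Your proposal follows essentially the same route as the paper: Gel'fand--Pinsker binning on a common auxiliary $U$, a Marton-type mutual covering of $(V_1,V_2)$ against the state conditioned on $U$ (yielding exactly the covering constraints $T_k\geq I(V_k;S|U)$ and $T_1+T_2\geq I(V_1;V_2|U)+I(V_1,V_2;S|U)$), indirect decoding of $U$ at receiver $\theta$ via $V_\theta$ with the packing condition $R+R_0'+T_\theta<I(U,V_\theta;Y_\theta)$, and Fourier--Motzkin elimination; the only difference from the paper is the cosmetic indexing convention (explicit message index versus random binning of $2^{nT_0}$ $\mathbf{u}$-sequences, with $T_0=R+R_0'$). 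The chain-rule identity you invoke to recover the $I(V_1;V_2|U,S)$ term in the sum-rate bound is correct, so the proposal matches the paper's argument.
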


\begin{remark} Expression \eqref{capacity-new-lowerbound1} can be reduced to  \eqref{capacity-low} by setting $V_1=V_2=U$. In contrast, there is no possible choice for $(X,U)$ in theorem \ref{theo-capacity-low} yielding the rate \eqref{capacity-new-lowerbound1}. This observation implies that expression \eqref{capacity-low} cannot be optimal for the general compound DMC \eqref{def-compound}. Furthermore, we shall see (section \ref{sec-example}) that for the compound GDP channel the RVs $(V_1,V_2)$ are indeed needed. 
\end{remark}

\begin{theorem}\label{theo-new-lowerbound2}
A lower bound on the capacity of the compound DMC $\big\{ W_\theta:\mathscr{X} \ex \mathscr{S}  \longmapsto \mathscr{Y}\big\}_{\theta\in \Theta}$ with general components $\Theta= \{1, \dots, K\}$ and states non-causally known only at the transmitter  is given by 
\begin{align} 
R = \sup\, & \min_{ \mc{K} \subseteq \Theta }\,  \frac{1}{\| \mc{K} \|}  \Big[ \sum\limits_{k \in  \mc{K} } I(U,V_k;Y_k) -  \| \mc{K} \| I(U;S) \nonumber\\
&+ H\big(\{ V_t\, | \, t \in \mc{K}   \} | U,S\big)  -\sum\limits_{k \in  \mc{K}  } H(V_k | U ) \Big], 
\label{capacity-new-lowerbound2}
\end{align} 
where the supremum  is taken over the set of all joint PDs $P_{XUV_1\dots V_K|S} = P_{X|UV_1\dots V_KS} P_{UV_1\dots V_K|S} $ satisfying $ (U,V_1,\dots, V_K) \minuso   (X,S) \minuso (Y_1,\dots,Y_K)$ form a Markov chain.
\end{theorem}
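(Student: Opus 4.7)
The plan is to extend the Marton-style superposition used for Theorem~\ref{theo-new-lowerbound1} to the $K$-component setting, combined with Gel'fand--Pinsker binning against the state $S$. Since the encoder is oblivious to $\theta$, a single encoding rule must serve every $k\in\Theta$; the idea is to build, on top of a shared auxiliary $U$, per-component auxiliaries $V_1,\dots,V_K$ so that the receiver, when the true parameter is $\theta=k$, decodes the pair $(U,V_k)$ from $Y_k$. The achievable rate is then what the worst-case $k$ supports, and the minimum over subsets $\mathcal{K}\subseteq\Theta$ in \eqref{capacity-new-lowerbound2} will emerge after eliminating the internal rate variables.

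Concretely, fix $P_{UV_1\cdots V_K|S}$ and $P_{X|UV_1\cdots V_KS}$ satisfying the Markov constraint. Generate a shared codebook $\{u^n(m,\ell_0)\}$ with $m\in[1{:}2^{nR}]$ and $\ell_0\in[1{:}2^{n\tilde R_0}]$ i.i.d.\ $\sim P_U$, and for each $(m,\ell_0)$ and each $k$ a sub-codebook $\{v_k^n(m,\ell_0,\ell_k):\ell_k\in[1{:}2^{n\tilde R_k}]\}$ i.i.d.\ $\sim P_{V_k|U}$. To transmit $m$ given $s^n$, the encoder first finds $\ell_0$ with $(u^n(m,\ell_0),s^n)$ jointly typical (covering lemma: $\tilde R_0>I(U;S)$), then uses the multivariate covering lemma to find $(\ell_1,\dots,\ell_K)$ such that $(u^n,v_1^n,\dots,v_K^n,s^n)$ is jointly typical under $P_{UV_1\cdots V_KS}$; this step succeeds w.h.p.\ provided that, for every non-empty $\mathcal{J}\subseteq\Theta$,
\[
\sum_{k\in\mathcal{J}}\tilde R_k \;>\; \sum_{k\in\mathcal{J}} H(V_k|U)\;-\;H\bigl(\{V_t:t\in\mathcal{J}\}\,\big|\,U,S\bigr).
\]
Finally $x^n$ is drawn conditionally from $P_{X|UV_1\cdots V_KS}$. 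When the channel is actually $\theta=k$, the receiver looks for the unique $m$ such that $(u^n(m,\ell_0),v_k^n(m,\ell_0,\ell_k),Y_k^n)$ is jointly typical for some $(\ell_0,\ell_k)$; by the packing lemma this is reliable if $R+\tilde R_0+\tilde R_k<I(U,V_k;Y_k)$, and since $|\Theta|<\infty$ a union bound over $\theta$ keeps the maximum average error vanishing.

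The final step is Fourier--Motzkin elimination of $\tilde R_0$ and the $\tilde R_k$'s. For each $\mathcal{K}\subseteq\Theta$, summing the packing constraints over $k\in\mathcal{K}$, substituting $\tilde R_0>I(U;S)$ and the covering inequality instantiated with $\mathcal{J}=\mathcal{K}$, and dividing by $|\mathcal{K}|$ yields exactly the subset bound of \eqref{capacity-new-lowerbound2}; minimizing over $\mathcal{K}$ gives the theorem. (The subset inequalities with $\mathcal{J}\subsetneq\mathcal{K}$ produce bounds on $R$ of the same form indexed by $\mathcal{J}$, so they are already absorbed into the min.) I expect the main obstacle to be the multivariate covering step: for $K=2$ it reduces to the standard mutual-covering lemma already used in the proof of Theorem~\ref{theo-new-lowerbound1}, but for general $K$ one needs its $(2^{|\Theta|}-1)$-inequality extension, whose second-moment proof must accommodate the correlation structure of the joint $P_{UV_1\cdots V_K|S}$. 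Once that is in hand, the packing analysis, the union bound over $\Theta$, and the Fourier--Motzkin elimination are routine.
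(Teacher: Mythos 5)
Your proposal follows essentially the same route as the paper: superposition of a common auxiliary $U$ (Gel'fand--Pinsker binned against $S$) with per-component auxiliaries $V_k$ selected by the multivariate (Marton) covering lemma, indirect decoding of $U$ via $V_k$ at receiver $k$, and Fourier--Motzkin elimination; your covering and packing constraints coincide exactly with the paper's \eqref{inequalities-constraint2}--\eqref{inequalities-constraint3} (with $T_0 = R+\tilde R_0$, $T_k=\tilde R_k$), and summing them over each subset $\mc{K}$ reproduces \eqref{capacity-new-lowerbound2}. The proof is correct and matches the paper's argument, including the same level of detail on the $K$-variable covering lemma and the elimination step.
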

Observe that the rate \eqref{capacity-new-lowerbound2} reduces to the rate  \eqref{capacity-new-lowerbound1} for the case of $K=2$. The proofs  of these theorems are sketched in Section \ref{sec-proof}. We next state capacity results for some special cases.

\subsection{Capacity of Some Compound Channels}\label{chI-sectionIII}

\begin{definition}[Degraded components]
Let  $\big\{ W_1, W_2:\mathscr{X}$ $ \ex \mathscr{S}  \longmapsto \mathscr{Y}\big\}$ be a compound DMC with components $\Theta=\{1,2\}$. It is said to be a \emph{stochastically degraded} \cite{Steinberg2005} if there exists some stochastic mapping $\big\{ \tilde{W}:\mathscr{Y}  \longmapsto \mathscr{Y} \big\}$ such that 
$
W_2(y_2|x,s)=\sum_{y_1\in \mc{Y}} W_1(y_1|x,s)  \tilde{W}(y_2|y_1), 
$
for all $y_2\in \mathscr{Y} $ and every pair $(x,s)\in \mathscr{X} \times\mathscr{S} $. This shall be denoted by $W_2 \preceq W_1$ (i.e. $W_2$ is a degraded version of the channel $W_1$).   
\end{definition}

\begin{theorem}[degraded components] \label{theo-capacity-degraded}
The capacity of the compound DMC $\big\{ W_\theta:\mathscr{X} \ex \mathscr{S}  \longmapsto \mathscr{Y}\big\}_{\theta\in \Theta}$ with components $\Theta= \{1,\dots, K\}$ where $W_{K}\preceq W_{K-1} \preceq  \dots \preceq  W_1$ and states non-causally known only at the transmitter is given by                                       
\begin{equation}
C_{\Theta} = \sup\limits_{P_{X V_1\dots V_ {K}|S}\in\mathscr{Q}_{\textrm{D}}}  \min_{\theta \in \Theta}\big\{I(V_{\theta};Y_\theta)-I(V_\theta;S) \big\},
\label{eq-capa-degraded}
\end{equation}
where the set of admissible input PDs is defined by 

$\mathscr{Q}_{\textrm{D}}=\big \{P_{X V_1\dots V_ {K}|S} \in  \mathscr{P}(  \mathscr{X}\ex \mathscr{V}_1 \ex \dots  \ex \mathscr{V}_K  ): $ 
$ P_{XV_1\dots V_ {K}|S}  = P_{X|SV_1}P_{V_1|SV_2} \dots P_{V_{K-1}|SV_K} $ \\ $P_{V_K|S},\,(V_1,\dots , V_ {K}) \minuso  (X,S) $ $ \minuso (Y_1,\dots, Y_ {K}) \big \}.$
\end{theorem}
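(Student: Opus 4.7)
The plan is to prove Theorem~\ref{theo-capacity-degraded} by establishing achievability and a matching converse, using the stochastic degradedness in an essential way.

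For the achievability, I would apply Theorem~\ref{theo-new-lowerbound2} and specialize the joint distribution to the nested form of $\mathscr{Q}_{\textrm{D}}$, so that $V_K - V_{K-1} - \dots - V_1 - X$ is a Markov chain conditioned on $S$. The goal is then to show that under this choice and the channel degradedness $Y_K - Y_{K-1} - \dots - Y_1 - (X,S)$, every subset-sum term in \eqref{capacity-new-lowerbound2} is at least $\min_{\theta}[I(V_\theta;Y_\theta) - I(V_\theta;S)]$, so that the overall minimum is attained at the singletons and equals the desired expression. Operationally, this corresponds to a nested Gel'fand--Pinsker scheme: $V_K^n$ is generated via GP binning against $S^n$, then $V_{K-1}^n$ conditionally on $V_K^n$ via GP binning against $S^n$ given $v_K^n$, and so on; at receiver $\theta$, stochastic degradedness enables successive decoding of $v_K^n, v_{K-1}^n, \ldots, v_\theta^n$ from $Y_\theta^n$, so that all layers whose target channels are worse than $W_\theta$ can be recovered along with $v_\theta^n$ itself, and the common message is read off from the shared bin index.

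For the converse, starting from Fano's inequality $nR\leq I(M;Y_\theta^n)+n\epsilon_n$ for each $\theta\in\Theta$, I would apply the Csisz\'ar sum identity in the standard Gel'fand--Pinsker style and define $V_{\theta,i}=(M,Y_\theta^{i-1},S_{i+1}^n)$. After time-sharing via $Q$ uniform on $\{1,\ldots,n\}$ and setting $V_\theta=(Q,V_{\theta,Q})$, this gives $R\leq I(V_\theta;Y_\theta)-I(V_\theta;S)+o(1)$ for every $\theta$. The factorization $P_{X|SV_1}P_{V_1|SV_2}\cdots P_{V_K|S}$ is then obtained by coupling the outputs $(Y_1,\ldots,Y_K)$ physically: since only the marginals $P(Y_\theta|X,S)$ affect the error probability, one may without loss of generality realize $Y_\theta^{i-1}$ as $Y_{\theta-1}^{i-1}$ corrupted by an independent degrading noise, which induces the Markov chain $V_K-V_{K-1}-\dots-V_1-(X,S)$ on the auxiliaries and hence the required factorization in $\mathscr{Q}_{\textrm{D}}$.

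The main obstacle I anticipate lies in the achievability step: verifying that the nested typicality encoding and successive decoding collapse, layer by layer, to precisely the singleton bounds $R\leq I(V_\theta;Y_\theta)-I(V_\theta;S)$. In general the multi-auxiliary bound of Theorem~\ref{theo-new-lowerbound2} carries cross-correlation penalties of the form $I(V_i;V_j|S)$ and conditional mutual informations $I(V_\theta;Y_\theta|V_{\theta+1},\ldots,V_K)$, and showing that the prescribed Markov structure together with degradedness is exactly what makes these penalties telescope away, without incurring residual loss, is where most of the technical effort concentrates.
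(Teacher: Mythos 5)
Your overall architecture coincides with the paper's (whose own proof is a two-sentence sketch: for $K=2$, set $V_2=U$ in Theorem~\ref{theo-new-lowerbound1}, and obtain the converse by ``linking together the outputs''). Your converse via Fano, the Csisz\'ar sum identity with $V_{\theta,i}=(M,Y_\theta^{i-1},S_{i+1}^n)$, and a physical coupling of the outputs exploiting that only the marginals $P(Y_\theta|X,S)$ matter is exactly what that phrase means, and your $K=2$ achievability is the paper's.

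The genuine gap is in the step you yourself flag as the main obstacle, and your suspicion is well founded: for $K\ge 3$ it is \emph{not} true that specializing Theorem~\ref{theo-new-lowerbound2} to a nested PD makes every subset term dominate the singleton minimum. The term for a subset $\mc{K}$ in \eqref{capacity-new-lowerbound2} differs from the average of the corresponding singleton terms by
$$\frac{1}{\|\mc{K}\|}\Big[H\big(\{V_t \,|\, t\in\mc{K}\}\,\big|\,U,S\big)-\sum_{k\in\mc{K}}H(V_k|U,S)\Big]\;\le\;0,$$
with equality if and only if the $\{V_k\}_{k\in\mc{K}}$ are conditionally independent given $(U,S)$. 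For $K=2$ with $U=V_2$ this vanishes because one of the two auxiliaries is degenerate given $U$; but for a genuinely nested chain with $K\ge3$, the intermediate layers are strongly dependent given $(U,S)$, so any subset containing two or more nontrivial layers pays a strictly positive Marton penalty and the minimum over subsets falls strictly below $\min_\theta\{I(V_\theta;Y_\theta)-I(V_\theta;S)\}$. The reason is structural: the codebook of Theorem~\ref{theo-new-lowerbound2} draws the $V_k$ independently given $U$ and pays a binning cost to force joint typicality, which is exactly the wrong construction when the target auxiliaries are nested. The correct achievability for general $K$ is the nested superposition / successive Gel'fand--Pinsker code you describe operationally (generate $V_K$ binned against $S$, then $V_{K-1}$ conditionally on $V_K$ and $S$, and so on, with receiver $\theta$ successively decoding $V_K,\dots,V_\theta$ by degradedness); it must be analyzed directly, not derived as a corollary of Theorem~\ref{theo-new-lowerbound2}. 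So the resolution is to abandon the reduction to Theorem~\ref{theo-new-lowerbound2} for $K\ge 3$ rather than to verify that its subset terms telescope -- a point the paper's claim that the $K=2$ argument ``easily extends'' also glosses over.
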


\begin{proof}
For the case $\| \Theta \|=2$, the direct proof follows by choosing $V_2=U$ in theorem \ref{theo-new-lowerbound1} and the converse proof follows by linking together the outputs  $(Y_1,\dots,Y_\theta)$. Whereas this proof procedure  easily extends to an arbitrary set $\Theta$.   \vspace{1mm}
\end{proof}

\begin{theorem}[feedback] \label{theo-capacity-feedback}
The capacity of the compound DMC $\mc{W}_{\Theta}$ with states non-causally known only at the transmitter and feedback is given by 
\begin{equation}
C_{\textrm{FB}} = \min_{\theta\in \Theta } \sup\limits_{P_{XU_{\theta} |S} \in\mathscr{Q}}\,  \big\{I(U_{\theta};Y_\theta)-I(U_\theta;S) \big\}.
\label{eq-capa-feedback}
\end{equation}
\end{theorem}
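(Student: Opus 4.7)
My plan is to prove matching achievability and converse bounds, both built on the observation that feedback allows the transmitter to identify $\theta$ before committing to a coding strategy.

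\emph{Converse.} Any $(n,\epsilon)$-code for the compound family with feedback is, for each fixed $\theta\in\Theta$, also an $(n,\epsilon)$-code for the single-user state-dependent channel $W_\theta$ with non-causal state at the encoder and output feedback. Hence $C_{\textrm{FB}}\leq\min_{\theta\in\Theta}C_\theta^{\textrm{FB}}$, where $C_\theta^{\textrm{FB}}$ denotes that single-user Gel'fand--Pinsker capacity with feedback. Invoking the (classical) fact that output feedback does not enlarge the Gel'fand--Pinsker capacity of a memoryless channel with i.i.d.\ state, one has $C_\theta^{\textrm{FB}}=\sup_{P_{XU|S}}\{I(U;Y_\theta)-I(U;S)\}$, and taking the minimum over $\theta$ yields the desired upper bound.

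\emph{Achievability.} Fix $R<\min_\theta\sup_{P_{XU|S}}\{I(U;Y_\theta)-I(U;S)\}$. I would split each block of length $n$ into three consecutive phases of lengths $n_a$, $n_b$, $n-n_a-n_b$ with $n_a,n_b=o(n)$ and both tending to infinity. In phase (a) the encoder sends a fixed probing sequence (independent of the message and chosen so that the induced output laws $W_\theta(\cdot|x,s)$ are pairwise distinguishable; indistinguishable channels may be identified without loss of generality), and uses the feedback $Y^{n_a}$ together with the finiteness of $\Theta$ to form a maximum-likelihood estimate $\hat\theta$ of $\theta$, with $\Pr\{\hat\theta\neq\theta\}\to 0$ by a standard Stein-type hypothesis-testing argument. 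In phase (b) the encoder signals $\hat\theta$ to the decoder using a classical compound-DMC code (without state) of vanishing rate $\log\|\Theta\|/n_b$, which is reliable on every $W_\theta$ provided each channel has positive capacity (the claim is otherwise trivial). In phase (c) both sides know $\theta$ with high probability, and the encoder and decoder run the single-user Gel'fand--Pinsker random-binning scheme at rate $R$ matched to $W_{\hat\theta}$. A union bound over the three phases establishes that $R$ is achievable.

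\emph{Main obstacle.} The key technical step is justifying that feedback does not enlarge the single-user Gel'fand--Pinsker capacity, which is the converse ingredient above. I would do this directly: set $U_i:=(M,Y^{i-1},S_{i+1}^n)$, note that channel memorylessness still gives the Markov chain $U_i\to(X_i,S_i)\to Y_i$, and apply Csisz\'ar's sum identity $\sum_i I(S_{i+1}^n;Y_i\mid M,Y^{i-1})=\sum_i I(Y^{i-1};S_i\mid M,S_{i+1}^n)$. The subtlety is that $U_i$ is no longer independent of $S_i$, since $Y^{i-1}$ can depend on $S_i$ through the non-causal state available at past encoder uses; however, because $S_i$ remains independent of $(M,S_{i+1}^n)$, the extra dependence collapses precisely into $I(U_i;S_i)=I(Y^{i-1};S_i\mid M,S_{i+1}^n)$, so the Csisz\'ar cross-terms cancel cleanly and one obtains $nR\leq\sum_i[I(U_i;Y_i)-I(U_i;S_i)]+n\epsilon_n$. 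A standard time-sharing plus Carath\'eodory-type cardinality argument then produces the single-letter bound, completing the missing piece of the converse.
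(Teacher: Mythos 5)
Your overall route is the one the paper gestures at (the paper's entire argument is the single sentence that the encoder can estimate the channel from the feedback and then invoke Gel'fand--Pinsker), and your converse is a genuine contribution relative to the paper's non-proof: the reduction to $\min_\theta C_\theta^{\mathrm{FB}}$ is immediate from the max-over-$\theta$ error criterion, and your Csisz\'ar-sum argument that feedback does not enlarge the single-user Gel'fand--Pinsker capacity is correct --- the Markov chain $U_i \rightarrow (X_i,S_i)\rightarrow Y_i$ with $U_i=(M,Y^{i-1},S_{i+1}^n)$ survives feedback by memorylessness, and $I(U_i;S_i)=I(Y^{i-1};S_i\mid M,S_{i+1}^n)$ because $S_i$ is independent of $(M,S_{i+1}^n)$, exactly as you say.

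The gap is in phase (b) of your achievability. You propose to convey $\hat\theta$ with ``a classical compound-DMC code (without state)'' and assert reliability ``provided each channel has positive capacity (the claim is otherwise trivial).'' But the state is still acting on the channel and is unknown to the decoder, so the relevant quantity for a code that ignores the encoder's state knowledge is $\max_{P_X}\min_\theta I(X;\bar Y_\theta)$ with $\bar W_\theta(y|x)=\sum_s P_S(s)W_\theta(y|x,s)$, and this can be exactly zero while every individual Gel'fand--Pinsker capacity is positive: take $Y_\theta=X\oplus S\oplus \theta$ with $S$ uniform binary, for which $\bar W_\theta(\cdot|x)$ is uniform for every $x$ and $\theta$, yet $\sup\{I(U;Y_\theta)-I(U;S)\}=1$ for each $\theta$. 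So the right-hand side of \eqref{eq-capa-feedback} is positive, the claim is not trivial, and phase (b) delivers nothing. Replacing the stateless code by a compound Gel'fand--Pinsker code does not obviously help either, since positivity of $\sup_P\min_\theta\{\cdot\}$ does not follow from positivity of $\min_\theta\sup_P\{\cdot\}$ --- that is precisely the gap the whole paper is about. The clean fix is to drop phase (b) altogether: either let the decoder run the same estimator on $Y^{n_a}$ that the encoder runs on the (identical) feedback, or, better, let the encoder (who now knows $\theta$) transmit with the codebook matched to $W_\theta$ while the decoder searches jointly over all $\|\Theta\|$ codebooks for a bin containing a codeword typical with $\mb{y}$ under the corresponding $P_{U_{\theta'}Y_{\theta'}}$; the packing bound $2^{-n(I(U_{\theta'};Y_{\theta'})-\delta)}$ holds for every received $\mb{y}$, so the union over codebooks costs only a factor $\|\Theta\|$ and any $R<\min_\theta\{I(U_\theta;Y_\theta)-I(U_\theta;S)\}$ goes through. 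With that repair your argument is complete and considerably more rigorous than what the paper offers.
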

This theorem easily follows from \cite{GF-1980} and by observing  that the encoder is able to estimate the channel from the feedback.

\section{Sketch of  Proof of theorems \ref{theo-new-lowerbound1}  and \ref{theo-new-lowerbound2} }\label{sec-proof}
 
\emph{Notation}: $\mathscr{P}(\mathscr{X})$ denotes the set of all atomic PDs on $\mathscr{X}$ with finite number of atoms. The $n$-th Cartesian power is defined as the sample space of $\mathbf{X}=(X_1,\dots,X_n)$, with $P^{n}_{\mathbf{X}}$-PD determined in terms of the $n$-th Cartesian power of $P_X$. The cardinality of an alphabet is denoted by $\|\cdot\|$. For every $\delta>0$, we denote $\delta$-typical and conditional $\delta$-typical sets by $\mathscr{T}_{[X]_{\delta}}^{n}$ and $\mathscr{T}_{[Y|X]_{\delta}}^{n}(\mathbf{x})$, respectively.   \vspace{1mm}

\begin{proof} We first provide details of the proof of theorem \ref{theo-new-lowerbound1} where the general idea is as follows. We encode the message $m$ into a RV $U$, by using superposition and Marton coding we allow partial (or indirect \cite{ElGamal09}) decoding of $U$ via two other RVs, namely $(V_1,V_2)$. Hence receiver $Y_1$ indirectly decodes $U$ via $V_1$ while  receiver $Y_2$ indirectly decodes $U$ via $V_2$.

\emph{Code Generation:} Let $T_0\geq R$ and $S_i \geq 0$ with $i=1,2$.  Fix a PD of the require form $P_{XUV_1V_2|S} = P_{X|UV_1V_2S} P_{UV_1V_2|S} $ satisfying $ (U,V_1,V_2) \minuso  (X,S) \minuso (Y_1,Y_2)$ form a Markov chain. Randomly and independently generate $\lfloor 2^{nT_0} \rfloor$ sequences $\mb{u}(t_0)$ form $\mathscr{T}_{[U]_{\delta}}^{n}$ indexed by $t_0\in \{1,\dots, \lfloor 2^{nT_0} \rfloor\}$. Randomly partition the $\lfloor 2^{nT_0} \rfloor$ sequences into $\lfloor 2^{nR} \rfloor$ equal size bins. For each 
 $\mb{u}(t_0)$, randomly and independent generate:  (i)  $\lfloor 2^{nT_1} \rfloor$ sequences $\mb{v}_1(t_0,t_1)$  
  indexed by $t_1\in \{1,\dots, \lfloor 2^{nT_1} \rfloor\}$, each distributed uniformly  over the set $\mathscr{T}_{[V_1|U]_{\delta}}^{n}\big(\mb{u}(t_0)\big)$, (ii) $\lfloor 2^{nT_2} \rfloor$ sequences $\mb{v}_2(t_0,t_2)$  indexed by $t_2\in \{1,\dots, \lfloor 2^{nT_2} \rfloor\}$, each distributed uniformly over the set $\mathscr{T}_{[V_2|U]_{\delta}}^{n}\big(\mb{u}(t_0)\big)$.

\emph{Encoding:}  To send a message $m\in \{1,\dots, \lfloor 2^{nR} \rfloor\}$, choose an index $t_0^*\in \{1,\dots, \lfloor 2^{nT_0} \rfloor\}$ from the bin $m$ such that $\mb{u}(t_0^*)$ and  $\mb{s}$ are jointly typical, and choose indices  $t_1^*\in \{1,\dots, \lfloor 2^{nT_1} \rfloor\}$ and $t_2^*\in \{1,\dots, \lfloor 2^{nT_2} \rfloor\}$ such that $\mb{v}_1(t_0^*,t_1^*)$ and  $\mb{s}$ are jointly typical, $\mb{v}_2(t_0^*,t_2^*)$ and  $\mb{s}$ are jointly typical and the pair  $\big(\mb{v}_1(t_0^*,t_1^*),\mb{v}_2(t_0^*,t_2^*)\big)$ is jointly typical with high probability. Then send the codeword  $\mb{x}$ distributed uniformly over the set $\mathscr{T}_{[X|UV_1V_2S]_{\delta}}^{n}\big(\mb{u}(t_0^*),\mb{v}_1(t_0^*,t_1^*),\mb{v}_2(t_0^*,t_2^*),\mb{s} \big)$. To ensure the success of this coding, we require that 
\begin{align}                                                    
T_0 -R &> S_0,\,\,\,\,\,\,\, T_1\geq S_1\,\,\,\,\, \textrm{and} \,\,\,\,\, T_2\geq S_2 \nonumber\\ 
S_0 & > I(U;S),\nonumber\\
S_1+S_2 & >  I(V_1;V_2|U)+I(V_1,V_2;S|U), \nonumber\\
S_1 & > I(V_1;S|U),\,\,\,\,\,\,\, S_2  > I(V_2;S|U). 
\label{inequalities-constraint1A}
\end{align} 

\emph{Decoding:} Receiver $Y_1$  finds $t_0$ and thus the message $m$ via indirect decoding of $\mb{u}(t_0)$  based on  $\mb{v}_1(t_0,t_1)$.  Hence receiver $Y_1$ declares that $t_0\in \{1,\dots, \lfloor 2^{nT_0} \rfloor\}$ is sent if it is the unique index such that  $\mb{v}_1(t_0,t_1)$ and  $\mb{y}_1$ are jointly typical   $(\mb{u}(t_0),\mb{v}_1(t_0,t_1))\in \mathscr{T}_{[UV_1]_{\delta}}^{n}$ for some $t_1\in \{1,\dots, \lfloor 2^{nT_1} \rfloor\}$. This can be achieved with small probability of error  provided 
\begin{equation}
 T_0+T_1 < I(U,V_1;Y_1).   \label{inequalities-constraint2A}
\end{equation}
Notice that here  receiver $Y_1$  cannot correctly decode  $\mb{v}_1(t_0,t_1)$. Similarly, receiver $Y_2$  finds $t_0$ and thus the message $m$ via indirect decoding of $\mb{u}(t_0)$  based on  $\mb{v}_2(t_0,t_2)$.  Hence receiver $Y_2$ declares that $t_0\in \{1,\dots, \lfloor 2^{nT_0} \rfloor\}$ is sent if it is the unique index such that  $\mb{v}_2(t_0,t_2)$ and  $\mb{y}_2$ are jointly typical   $(\mb{u}(t_0),\mb{v}_2(t_0,t_2))\in \mathscr{T}_{[UV_2]_{\delta}}^{n}$ for some $t_2\in \{1,\dots, \lfloor 2^{nT_2} \rfloor\}$. This can be achieved with small probability of error  provided 
\begin{equation}
 T_0+T_2 < I(U,V_2;Y_2). \label{inequalities-constraint3A}
\end{equation}
Observe that receiver $Y_2$  cannot correctly decode  $\mb{v}_2(t_0,t_2)$. By applying the Fourier-Motzkin procedure to eliminate $(T_i,$ $S_i)_{\{i=0,1,2\}}$ from \eqref{inequalities-constraint1A}-\eqref{inequalities-constraint3A}, we obtain the following inequalities:
\begin{align} 
R &\leq I(U,V_1;Y_1)-I(U,V_1;S),\nonumber\\ 
R &\leq I(U,V_2;Y_2)-I(U,V_2;S),\nonumber\\
2R & \leq  I(U,V_1;Y_1)+I(U,V_2;Y_2)  - 2I(U;S) \nonumber\\
&-I(V_1;V_2|U)- I(V_1,V_2;S|U). 
\label{inequalities-proof}
\end{align} 
 This concludes the proof of the rate \eqref{capacity-new-lowerbound1}. We now provide details on the proof of the extended rate \eqref{capacity-new-lowerbound2}. The code generation, encoding and decoding remain very similar to the previous. Encoding succeeds with high probability as long as 
\begin{align}   
T_0 -R &> S_0,\,\,\,\,\,\,\, T_k\geq S_k\,\,\,\,\, \textrm{for all $k=\{1,\dots,K\}$}\nonumber\\ 
 \sum\limits_{k \in  \mc{K} } S_k &>  \sum\limits_{k \in  \mc{K}  } H(V_k | U ) - H\big(\{ V_t\, | \, t \in \mc{K}   \} | U,S\big), 
\label{inequalities-constraint2}
\end{align} 
for every subset $\mc{K} \subseteq \{1,\dots,K\}$ and $S_k\geq 0$ for all $k=\{1,\dots,K\}$. Decoding  succeeds with high probability if  
\begin{equation}
 T_0+T_k < I(U,V_k;Y_k),\,\,\,\,\,\,\,\,\,\,k=\{1,\dots,K\}.  \label{inequalities-constraint3}
\end{equation}
By combinning expressions \eqref{inequalities-constraint2} and  \eqref{inequalities-constraint3}, applying  Fourier-Motzkin procedure, it is not difficult to show the rate \eqref{capacity-new-lowerbound2}. 
\end{proof}

\section{Application Example:  Compound Gaussian Dirty-Paper channel} \label{sec-example}
In this section,  we begin by introducing the compound Gaussian Dirty-Paper (GDP) channel and then present lower and upper bounds on its capacity. Consider  a compound memoryless GDP channel whose output is given by  
 \begin{equation}
 \big\{\mb{Y}_{(\beta,\theta)} =\beta\!\cdot\!\mb{X} + \theta\!\cdot\!\mb{S} + \mb{Z}\big\}_{(\beta,\theta)\in\Theta}, \label{eq-multiplicative-channel}
 \end{equation}
where $\mb{X}=(X_{1},\dots,X_{n})$ is the channel input and  $\mb{S}$ is a white Gaussian interference (known to the transmitter only) of power $Q$ and independent of the sequence $\mb{Z}$ of white Gaussian noise of power $N$. The inputs must satisfy a limited-power constraint $P$ (often $\ll Q$), which takes the form 
$\esp \left[ \sum_{i=1}^n X_i^2(m,\mb{S})\right]\leq n P$, where the expectation is taken over the ensemble of messages and the interference sequence. 

We  focus on the case $\beta=\beta_0=1$, where the transmitter is unaware of $\theta\in\Theta$, assumed to take values from a set of real numbers, namely $\Theta\triangleq\{\theta_1,\dots,\theta_{\|\Theta \|}\}$. The fading coefficient $\theta$ remains fixed throughout a transmission.  
\subsection{Lower and Upper Bounds on the Capacity}
\begin{lemma}[Lower bound]\label{lemma-lower-compound}
A lower bound on the capacity of the compound GDP channel \eqref{eq-multiplicative-channel}  is given by
\begin{equation}
C_{\Theta}(P)\geq\max_{ (P_C,P_\Delta ):P_C\geq0,P_\Delta \geq 0,P_C+P_\Delta \leq P} R_{-}^{\Theta}(P_C,P_\Delta),\label{capacity-multiplicative-channel1}
\end{equation}
where $\theta_{\textrm{min}}\triangleq \min \{ \theta:\,\theta\in \Theta\}$, $\theta_{\textrm{max}}\triangleq \max \{ \theta:\,\theta\in \Theta\}$ and
\begin{align}
&R_{-}^{\Theta}(P_C,P_\Delta)=\displaystyle{\frac{1}{2\|\Theta \|}\log \left(1+\frac{P_\Delta}{N}\right)}  \nonumber\\
&+ \left\{\begin{array}{ll}
\displaystyle{\frac{1}{2}\log\left( 1+\frac{P_C}{P_\Delta+N+ \theta_{\textrm{min}}^2 Q}\right)} & \textrm{if $| \theta_{\textrm{min}}| = | \theta_{\textrm{max}}|$}\\
\displaystyle{\frac{1}{2}\log\left[ 1+\frac{P_C(1-\epsilon_\Theta)}{P_\Delta+N + \epsilon_\Theta P_C}\right]} & \textrm{if $| \theta_{\textrm{min}}| \neq |\theta_{\textrm{max}}|$},
\end{array}\right.     
\label{capacity-multiplicative-channel2}     
\end{align}       
 and the \emph{mismatch factor} $0\leq \epsilon_\Theta\leq 1$ is defined as          
\begin{equation*}
\epsilon_\Theta\triangleq \frac{ 1}{(\theta_{\textrm{min}}+\theta_{\textrm{max}} )^2} \left[ \sqrt{\theta_{\textrm{max}}^2+ \frac{T}{Q}}  -  \sqrt{\theta_{\textrm{min}}^2+ \frac{T}{Q}}\vspace{-2mm} \right]^2\label{capacity-multiplicative-channel3}
\end{equation*}
with $T=P_C+P_\Delta+N$. Optimizing expression \eqref{capacity-multiplicative-channel2} over $P_C$ and $P_\Delta$ subject to $P_C+P_\Delta \leq P$   yields the rate: 

\begin{equation}
R_{-}^{\Theta}(P)=\left\{ \begin{array}{l} 
\displaystyle{\frac{1}{2}\log\left[ 1+\frac{P(1-\epsilon_\Theta^\ast)}{N + \epsilon_\Theta^\ast P}\right]},\,\,\,\,  \textrm{if $\,\,\,\,  \epsilon_\Theta^\ast< \displaystyle{\frac{N(\|\Theta \|-1)}{P+\|\Theta \|N}}$} \\
\displaystyle{\frac{1}{2\|\Theta \|}\log\left[\frac{(P+N)}{\|\Theta \| N (1-\epsilon_\Theta^\ast)}  \left( \frac{\|\Theta \|-1}{\|\Theta \| \epsilon_\Theta^\ast} \right)^{\|\Theta \|-1} \right]}, \\ \,\,\,\, \,\,\,\, \,\,\,\, \,\,\,\, \,\,\,\, \,\,\,\, \,\,\,\, \,\,\,\, \,\,\,\,  \textrm{if $\,\,\,\, \displaystyle{ \frac{ N(\|\Theta \|-1)}{P+\|\Theta \|N}\leq \epsilon_\Theta^\ast<  \frac{\|\Theta \|-1}{\|\Theta \|}}$} \\
\displaystyle{\frac{1}{2\|\Theta \|}\log\left( 1+\frac{P}{N}\right)}, \,\,\,\, \textrm{if $\,\,\,\, \epsilon_\Theta^\ast\geq  \displaystyle{\frac{\|\Theta \|-1}{\|\Theta \|}}$} 
\end{array}\right.
\label{capacity-multiplicative-channel6B}
\end{equation}
where   

\begin{equation*}
\epsilon_\Theta^\ast\triangleq\frac{ 1}{(\theta_{\textrm{min}}+\theta_{\textrm{max}} )^2} \left[ \sqrt{\theta_{\textrm{max}}^2+ \frac{P+N}{Q}}  -  \sqrt{\theta_{\textrm{min}}^2+ \frac{P+N}{Q}} \right]^2, \label{capacity-multiplicative-channel3BB}
\end{equation*}
for $|\theta_{\textrm{min}}| \neq |\theta_{\textrm{max}}|$. For the case $|\theta_{\textrm{min}}| = |\theta_{\textrm{max}}|$,  
\begin{equation}
R_{-}^{\Theta}(P)=\left\{ \begin{array}{l} 
\displaystyle{\frac{1}{2}\log\left( 1+\frac{P}{N +  \theta_{\textrm{min}}^2 Q }\right)}, \,\,\,\,\,\,\,\,  \,\, \textrm{if $ \displaystyle{\frac{\theta_{\textrm{min}}^2}{(\|\Theta \|-1)}}    < \displaystyle{\frac{N}{Q}}$} \\
\displaystyle{\frac{1}{2\|\Theta \|}\log\left[ \frac{ (P+N+\theta_{\textrm{min}}^2 Q)^{\|\Theta \|} }{  \|\Theta \| N } \left(\frac{\|\Theta \|-1}{\|\Theta \|  \theta_{\textrm{min}}^2 Q } \right)^{M-1} \right]}, \\  \,\,\,\,\,\,\,\,  \,\, \,\,\,\,\,\,\,\,  \,\,\,\,\,\,\,\,   \,\,\,\,\,\,\,\,  \,\, \,\,\,\,\,\,\,\,  \,\,\textrm{if $ \displaystyle{\frac{N}{Q}\leq \frac{\theta_{\textrm{min}}^2}{(\|\Theta \|-1)}  < \frac{(P+N)}{Q}}$} \\
\displaystyle{\frac{1}{2\|\Theta \|}\log\left( 1+\frac{P}{N}\right)}, \,\,\,\, \,\,\,\, \textrm{if $\displaystyle{\frac{\theta_{\textrm{min}}^2}{(\|\Theta \|-1)} }   \geq  \displaystyle{\frac{(P+N)}{Q}}$.} 
\end{array}\right.
\label{capacity-multiplicative-channel6C}
\end{equation}
\end{lemma}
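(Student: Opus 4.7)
The strategy is to specialize the general multi-user achievability rate of Theorem \ref{theo-new-lowerbound2} to the Gaussian channel \eqref{eq-multiplicative-channel} by a two-layer Gaussian superposition choice of the auxiliaries. First I would split the channel input as $\mb{X} = \mb{X}_C + \mb{X}_\Delta$, with $X_C\sim \mc{N}(0,P_C)$ and $X_\Delta\sim\mc{N}(0,P_\Delta)$ mutually independent and independent of $S\sim\mc{N}(0,Q)$, subject to $P_C+P_\Delta\leq P$. The ``common'' layer $X_C$ will carry the message through a single Costa dirty-paper code whose inflation factor $\alpha$ is common to all $\theta\in\Theta$, while the ``refinement'' layer $X_\Delta$ will be dirty-paper coded in a way tuned to each receiver.

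The key step is the Gaussian choice of auxiliaries
\[
U = X_C + \alpha\, S, \qquad V_k = X_\Delta + \beta_k\, S,\quad k=1,\dots,\|\Theta\|,
\]
where $\alpha$ is a single robust parameter and $\beta_k=\theta_k P_\Delta/(P_\Delta+N)$ is Costa-matched to $\theta_k$. Joint Gaussianity preserves the Markov chain $(U,V_1,\dots,V_K)$--$(X,S)$--$(Y_1,\dots,Y_K)$ required by the theorem. Evaluating $I(U,V_k;Y_k)-I(U;S)$ in closed form yields the mismatched-DPC expression $\tfrac{1}{2}\log\bigl(1+P_C(1-\epsilon_k(\alpha))/(P_\Delta+N+\epsilon_k(\alpha)P_C)\bigr)$ plus the interference-free refinement contribution $\tfrac{1}{2}\log(1+P_\Delta/N)$, where $\epsilon_k(\alpha)$ is the residual-interference coefficient at $\theta_k$. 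Equalizing $\epsilon_k(\alpha)$ at the extremes $\theta_{\min},\theta_{\max}$ recovers the closed-form $\epsilon_\Theta$ of the statement; when $|\theta_{\min}|=|\theta_{\max}|$ a sign-symmetric choice of $\alpha$ drives the mismatch to zero and leaves only the residual interference power $\theta_{\min}^2 Q$ in the DPC denominator, producing the upper branch of \eqref{capacity-multiplicative-channel2}.

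Next I would show that the worst subset in the $\min_{\mc{K}\subseteq\Theta}$ of \eqref{capacity-new-lowerbound2} is $\mc{K}=\Theta$: the $\|\Theta\|$ independently-drawn $V_k$ codebooks conflict at the encoder through their shared use of the state $S$ and of the common layer $X_\Delta$, and the joint-entropy combination $H(\{V_t\mid t\in\mc{K}\}|U,S)-\sum_{k\in\mc{K}}H(V_k|U)$ scaled by $1/|\mc{K}|$ introduces exactly the $1/\|\Theta\|$ penalty that multiplies $\tfrac{1}{2}\log(1+P_\Delta/N)$ in \eqref{capacity-multiplicative-channel2}. Adding the common and refinement contributions yields $R_-^\Theta(P_C,P_\Delta)$. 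A standard quantization/truncation argument lifts the typicality-based proof of Theorem \ref{theo-new-lowerbound2} from the DMC to the Gaussian alphabet with the mean power constraint.

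Finally, to obtain \eqref{capacity-multiplicative-channel6B}--\eqref{capacity-multiplicative-channel6C} I would maximize $R_-^\Theta(P_C,P_\Delta)$ over $P_C+P_\Delta=P$ via KKT; the three regimes correspond respectively to (i) $P_\Delta=0$ being active (small $\epsilon_\Theta^\ast$, pure DPC optimal), (ii) an interior stationary point (moderate $\epsilon_\Theta^\ast$, producing the logarithmic-product expression), and (iii) $P_C=0$ being active (large $\epsilon_\Theta^\ast$, pure broadcasting optimal). The main obstacle I expect is twofold: identifying the robust $\alpha$ that simultaneously equalizes the mismatch over all $\theta\in\Theta$ and then exposing the three power-allocation regimes with the explicit thresholds $N(\|\Theta\|-1)/(P+\|\Theta\|N)$ and $(\|\Theta\|-1)/\|\Theta\|$; once the right Gaussian auxiliaries are in place the remaining computations are algebraic.
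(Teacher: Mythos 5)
Your treatment of the common layer agrees with the paper's: a single robust inflation parameter for $U$, the refinement power $P_\Delta$ plus the residual interference treated as noise, and the mismatch factor $\epsilon_\Theta$ obtained by equalizing the residual coefficient at the extreme gains $\theta_{\textrm{min}},\theta_{\textrm{max}}$ (with the sign-symmetric case handled separately). The gap is in the refinement layer. You take a single shared signal $X_\Delta$ with correlated auxiliaries $V_k=X_\Delta+\beta_k S$ and claim the factor $1/\|\Theta\|$ multiplying $\tfrac{1}{2}\log(1+P_\Delta/N)$ emerges from the Marton penalty $H(\{V_t\,|\,t\in\mc{K}\}|U,S)-\sum_{k}H(V_k|U)$ with $\mc{K}=\Theta$. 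This cannot work as described: conditioned on $(U,S)$ your $V_1,\dots,V_K$ are all deterministic functions of the single variable $X_\Delta$, so their joint law is singular, $h(V_1,\dots,V_K|U,S)=-\infty$, and the mutual-covering condition \eqref{inequalities-constraint2} then demands infinite binning rates --- the encoder cannot find a jointly typical tuple from independently generated codebooks. Softening the degeneracy with independent perturbations only trades an infinite covering cost for cross-stream self-interference, and neither route produces the clean $\tfrac{1}{2\|\Theta\|}\log(1+P_\Delta/N)$ term.

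The paper obtains that term by time-sharing the private layer, not by Marton correlation: the message is split as $m=(m_0,m_1)$, the block is divided into $\|\Theta\|$ sub-blocks of lengths $\lfloor n\lambda_k\rfloor$, and in sub-block $k$ the refinement $\mb{X}_k=\mb{V}_k-\alpha_k(\theta_k-\alpha_c)\mb{S}-\mb{U}$ is a dirty-paper code perfectly matched to $\theta_k$, so receiver $k$ cancels the interference completely over its own fraction and collects $\lambda_k\cdot\tfrac{1}{2}\log(1+P_\Delta/N)$ with $\lambda_k=1/\|\Theta\|$; the resulting rate is $\min_k\big\{I(U;Y_k)-I(U;S)+\lambda_k\,[\,I(V_k;Y_k|U)-I(V_k;S|U)\,]\big\}$. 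Your closing KKT analysis of the split $P_C+P_\Delta=P$ and the three regimes is consistent with \eqref{capacity-multiplicative-channel6B}--\eqref{capacity-multiplicative-channel6C}, but it is applied to an intermediate expression your construction does not actually achieve, so the refinement layer must first be replaced by the time-sharing argument.
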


\begin{lemma}[Upper bound]\label{lemma-upper-compound}
An upper bound on the capacity of the compound GDP channel is given by       
\begin{align}
C_{\Theta}(P)\leq  & R_{+}^{\Theta}(P)\triangleq \max\limits_{\rho \in [-1,1]} \min \Big\{  \displaystyle{\frac{1}{2}\log\left[ 1+\frac{P(1-\rho^2)}{N}\right]},\nonumber \\ 
 & \displaystyle{\frac{1}{4}\log\left[\frac{ P+N+ \theta_{\textrm{max}}^2 Q + 2 \theta_{\textrm{max}} \rho \sqrt{PQ}   }{ \sqrt{( \theta_{\textrm{max}} -\theta_{\textrm{min}}  )^2NQ} }\right]} \nonumber \\
+&    \displaystyle{\frac{1}{4}\log\left[\frac{ P+N+ \theta_{\textrm{min}}^2 Q + 2 \theta_{\textrm{min}} \rho \sqrt{PQ}   }{ \sqrt{( \theta_{\textrm{max}} -\theta_{\textrm{min}}  )^2NQ} }\right]}                  \Big\},      \label{upper-bound-exp}
\end{align}
with $\theta_{\textrm{min}}\triangleq \min \{ \theta:\,\theta\in \Theta\}$ and $\theta_{\textrm{max}}\triangleq \max \{ \theta:\,\theta\in \Theta\}$.
\end{lemma}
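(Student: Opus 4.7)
The plan is to prove the upper bound by applying Fano's inequality to the two extreme components $\theta_{\textrm{min}}$ and $\theta_{\textrm{max}}$, exploiting the linear structure of the GDP channel through Gaussian max-entropy and Sato-type arguments, and optimizing over a single parameter $\rho$ encoding the normalized input--state correlation induced by the code.

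First I would carry out a standard single-letter reduction. For any reliable $(n,\epsilon_n)$-code of rate $R$, Fano's inequality separately applied to the two extreme channels gives
\begin{equation*}
nR\leq I(M;\mb{Y}_{\theta})+n\epsilon_n, \qquad \theta\in\{\theta_{\textrm{min}},\theta_{\textrm{max}}\}.
\end{equation*}
Since the message $M$ is independent of the state sequence $\mb{S}$, a time-sharing and Gaussian-relaxation argument (conditional maximum entropy under a second-moment constraint) reduces the analysis to a single-letter jointly Gaussian $(X,S)$ with $\mathrm{Var}(X)\leq P$, $\mathrm{Var}(S)=Q$, and normalized correlation $\rho:=\esp[XS]/\sqrt{PQ}\in[-1,1]$. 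For brevity I write $\sigma_\theta^2 := P+\theta^2 Q+2\theta\rho\sqrt{PQ}+N$ for the induced output variance.

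For the first term in the minimum of \eqref{upper-bound-exp}, I would provide $\mb{S}$ to the decoder as genie side information and invoke $I(M;\mb{Y}_\theta)\leq I(\mb{X};\mb{Y}_\theta\mid\mb{S})$. A standard per-letter expansion, together with the bound $\mathrm{Var}(X\mid S)\leq P(1-\rho^2)$ coming from jointly-Gaussian relaxation, yields $R\leq \tfrac{1}{2}\log(1+P(1-\rho^2)/N)$, which is precisely the first branch in the $\min$.

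For the second branch, I would sum the two Fano inequalities into $2nR\leq I(M;\mb{Y}_1)+I(M;\mb{Y}_2)+2n\epsilon_n$ and split each term as $h(\mb{Y}_\theta)-h(\mb{Y}_\theta\mid M)$. Gaussian max entropy upper-bounds the unconditional entropies by $\tfrac{n}{2}\log(2\pi e\,\sigma_\theta^2)$, so the heart of the argument is the matching conditional-entropy lower bound
\begin{equation*}
h(\mb{Y}_1\mid M)+h(\mb{Y}_2\mid M)\geq n\log\!\left(2\pi e\,|\theta_{\textrm{max}}-\theta_{\textrm{min}}|\sqrt{NQ}\right).
\end{equation*}
I would attempt this via the invertible linear change of variables $(Y_1,Y_2)\mapsto(Y_2-Y_1,\,\theta_{\textrm{max}}Y_1-\theta_{\textrm{min}}Y_2)$, whose Jacobian has absolute determinant $|\theta_{\textrm{max}}-\theta_{\textrm{min}}|$. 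The first component equals $(\theta_{\textrm{max}}-\theta_{\textrm{min}})S+(Z_2-Z_1)$, statistically independent of $M$, while the second is a noisy observation of $(\theta_{\textrm{max}}-\theta_{\textrm{min}})X$. Combining a Sato-style adversarial choice of the joint noise distribution with a Hadamard/AM--GM inequality on the conditional covariance matrix of the rotated coordinates should produce the geometric-mean factor $\sqrt{NQ}$, yielding $R\leq \tfrac{1}{4}\log[\sigma_{\theta_{\textrm{min}}}^2\sigma_{\theta_{\textrm{max}}}^2/\{(\theta_{\textrm{max}}-\theta_{\textrm{min}})^2 NQ\}]$, which is the sum of the two $\tfrac{1}{4}\log(\cdot)$ terms in \eqref{upper-bound-exp}.

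Finally, because both inequalities hold for every $\rho$ compatible with the code and the encoder can in principle induce any $\rho\in[-1,1]$, we conclude $C_\Theta(P)\leq\max_{\rho\in[-1,1]}\min\{A(\rho),B(\rho)\}=R_+^\Theta(P)$. The main obstacle is the conditional-entropy lower bound in the second branch: naïve subadditivity or EPI arguments only produce the arithmetic combination $(\theta_{\textrm{max}}-\theta_{\textrm{min}})^2 Q+2N$ inside the logarithm, whereas the stated bound requires the product $(\theta_{\textrm{max}}-\theta_{\textrm{min}})^2 NQ$. Extracting this geometric-mean structure appears to rely on a delicate determinant identity for the rotated conditional covariance together with a Sato-type limit in which the degeneracies of $\det K_Y$ and $\det K_Z$ cancel in exactly the right ratio; this is the step where the specific linear algebra of the GDP channel is genuinely needed, and it is where I expect the bulk of the technical work to reside.
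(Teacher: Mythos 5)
Your approach is the intended one: the paper offers no proof of this lemma beyond the remark that it ``follows similar to'' the carbon-copying-onto-dirty-paper converse of Khisti, Lapidoth and Wornell, and that argument is precisely your combination of (i) a genie-aided ($\mb{S}$ at the decoder) single-user bound giving the first branch of the minimum and (ii) a sum of the two Fano inequalities for the extreme components $\theta_{\textrm{min}},\theta_{\textrm{max}}$ with Gaussian max-entropy and a Sato-type noise coupling giving the second. The single-letterization through the time-averaged correlation $\rho$ and the concluding $\max_{\rho}\min\{\cdot\}$ step are also as intended.

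The only real gap is that you leave the crucial inequality $h(\mb{Y}_1|M)+h(\mb{Y}_2|M)\geq n\log\big(2\pi e\,|\theta_{\textrm{max}}-\theta_{\textrm{min}}|\sqrt{NQ}\big)$ open and predict it needs a ``delicate determinant identity''; it does not. Commit to Sato's coupling at the outset: since each decoder observes only its marginal channel, you may set $\mb{Z}_1=\mb{Z}_2=\mb{Z}$. Subadditivity gives $h(\mb{Y}_1|M)+h(\mb{Y}_2|M)\geq h(\mb{Y}_1,\mb{Y}_2|M)$, and the unit-Jacobian map $(\mb{Y}_1,\mb{Y}_2)\mapsto(\mb{Y}_2-\mb{Y}_1,\mb{Y}_1)$ together with the chain rule give $h(\mb{Y}_1,\mb{Y}_2|M)=h(\mb{Y}_2-\mb{Y}_1|M)+h(\mb{Y}_1|\mb{Y}_2-\mb{Y}_1,M)$. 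Under the coupling, $\mb{Y}_2-\mb{Y}_1=(\theta_{\textrm{max}}-\theta_{\textrm{min}})\mb{S}$ exactly, which is independent of $M$, so the first term is $\tfrac{n}{2}\log\big(2\pi e(\theta_{\textrm{max}}-\theta_{\textrm{min}})^2Q\big)$; and since $\mb{Y}_2-\mb{Y}_1$ determines $\mb{S}$, hence together with $M$ determines $\mb{X}$, the second term equals $h(\mb{Z})=\tfrac{n}{2}\log(2\pi e N)$. Summing the two halves of the logarithms produces the geometric mean $|\theta_{\textrm{max}}-\theta_{\textrm{min}}|\sqrt{NQ}$ automatically ($Q$ from the difference coordinate, $N$ from the residual noise); the arithmetic combination $(\theta_{\textrm{max}}-\theta_{\textrm{min}})^2Q+2N$ that worried you only appears if you insist on independent noises. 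Dividing the resulting bound on $2nR$ by $2n$ yields exactly the two $\tfrac{1}{4}\log(\cdot)$ terms in \eqref{upper-bound-exp}, so your outline closes with no further machinery.
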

The proof of lemma \ref{lemma-lower-compound} is sketched below while the proof of lemma  \ref{lemma-upper-compound}  follows similar to \cite{Khisti-Lapidoth-Wornell2007}. Observe that the \emph{mismatch factor} introduces the capacity loss due to the uncertainty at the encoder on the value of $\theta$. Hence for scenarios where the mismatch factor is smaller, e.g. $\theta_{\textrm{min}} \approx \theta_{\textrm{max}}$ or $(P+N)\gg Q  $, expression \eqref{capacity-multiplicative-channel2} becomes closer to the capacity when the encoder and the decoder are both aware of the channel index $\theta$ controlling the communication. 

For $\|\Theta \|=2$, the lower bound \eqref{capacity-multiplicative-channel6B} provides significative gains compared to the previous bound \cite{Khisti-Lapidoth-Wornell2007}.  Although the bound \eqref{upper-bound-exp} is not tight in general,  notice that it is a sharper bound than those derived by previous results in \cite{Mitran-Devroye-Tarokh2006,Khisti-Lapidoth-Wornell2007,Pulkit-Anant2007} and it is tight for some special sets $\Theta$ as shown in Figure \ref{fig}. 

\emph{Coding strategy:} Notice that when $\epsilon_\Theta^\ast< N(\|\Theta \|-1)/(P+\|\Theta \| N)$ the best encoder strategy is implementing a DPC to mitigate the common part of the interfering signal and hence the remainder part is treated as additional noise. In contrast to this, if $\epsilon_\Theta^\ast\geq  (\|\Theta \|-1)/\|\Theta \|$ the best encoder strategy becomes to use time-sharing to mitigate (completely) the interference. This is obtained by allowing the encoder and the decoder to have access to a source of common randomness (e.g. a dither sequence \cite{Erez-Shamai-Zamir2005}), which is not available if $X$ is restricted to be a deterministic mapping. Otherwise, when $N(\|\Theta \|-1)/(P+\|\Theta \|N) \leq \epsilon_\Theta^\ast<  (\|\Theta \|-1)/\|\Theta \|$   the encoder combines both strategies by using superposition coding.
\begin{figure}
\centering
\includegraphics[width=5in]{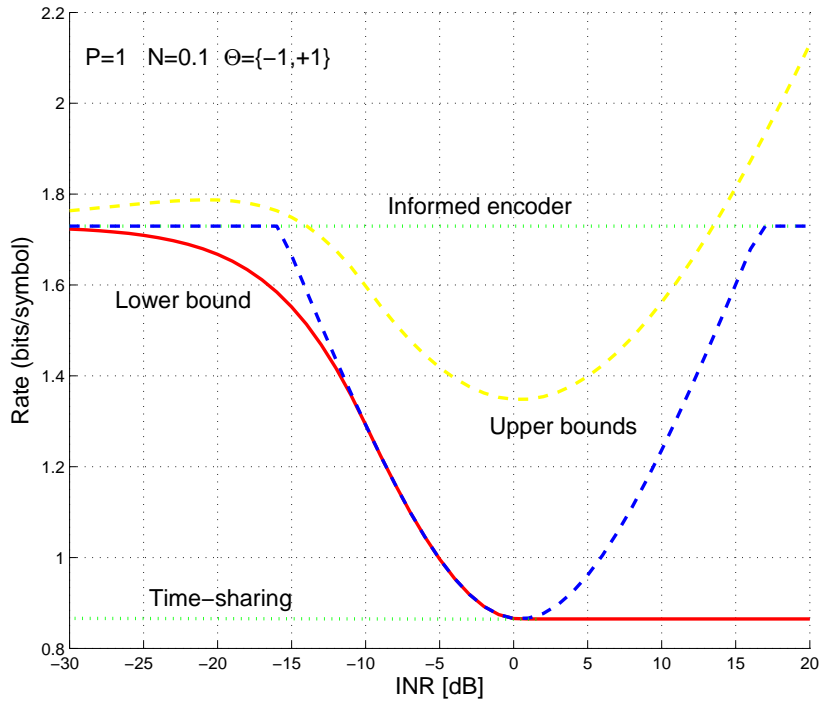}
\caption{Lower and upper bounds on the capacity of compound GDP channel $\Theta=\{-1,+1\}$, for $P=1$ and $N=0.1$, as a function of INR=$Q/N$. \vspace{-4mm}}
\label{fig}
\end{figure}    
\emph{Asymptotic analysis:} In the limit of high SNR (fixed $N$ and $Q$, $P_C+P_\Delta \to \infty $) the mismatch factor $\epsilon_\Theta$ vanishes to zero and thus the rate expression \eqref{capacity-multiplicative-channel2} coincides with its natural upper bound given by the case when the encoder is informed with $\theta$, which establishes the optimality of the lower bound in the high SNR limit.  In the limit when $Q\to \infty$ (for $N,P,\|\Theta \|$ fixed)  the mismatch factor becomes 
$$
 \epsilon_{\infty}^\ast \triangleq \lim\limits_{Q \to \infty} \epsilon_\Theta^\ast =\left( \frac{ \theta_{\textrm{max}}-\theta_{\textrm{min}} }{\theta_{\textrm{max}}+\theta_{\textrm{min}} }\right)^2. 
$$
Furthermore, when $\|\Theta \|\gg 1$ (for $N,P,Q$ fixed) the lower bound in \eqref{capacity-multiplicative-channel6B} reduces to 
 \begin{equation}
\lim\limits_{\|\Theta \| \to \infty} R_{-}^{\Theta}(P)=\frac{1}{2}\log\left[ 1+\frac{P(1- \epsilon_\Theta^\ast)}{N + \epsilon_\Theta^\ast  P}\right],            \label{eq-scenario1}
\end{equation}
while for $\theta_{\textrm{max}} \gg \theta_{\textrm{min}}$  (for $N,P,Q,K$ fixed) \eqref{capacity-multiplicative-channel6B} writes
 \begin{equation}
\lim\limits_{\theta_{\textrm{max}} /\theta_{\textrm{min}}  \to \infty} R_{-}^{\Theta}(P)= \frac{1}{2\|\Theta \|}\log\left( 1+\frac{P}{N}\right). \label{eq-scenario2}
\end{equation}
The scenarios \eqref{eq-scenario1} and \eqref{eq-scenario2}, i.e. $ \epsilon_{\infty}^\ast \approx 1$ and $\|\Theta \|\gg 1$, yield the most important loss of degrees of freedom.

\subsection{Sketch of Proof of theorem \ref{lemma-lower-compound}} 


\emph{Coding scheme:} The encoder splits the information $m=(m_0,m_1)$, namely common information $m_0$ and private information $m_1$. Then it divides the power $P$ into $\{P_C,P_1,\dots,P_{\|\Theta \|}\}$. The encoder sends $m_0$ using a standard DPC $\mb{U}$, sampled of length $n$ i.i.d. from a PD $\textrm{P}_{U|S}=\mc{N}(\alpha_c S, P_C)$, applied to the interference $\mb{S}$ and treats the reminder interference as noise. Whereas $m_1$ is sent using time-sharing via $\|\Theta \|$ different DPCs $\{\mb{V}_{1},\dots,\mb{V}_{K}\}$,  sampled i.i.d. of lengths $\{\lfloor n  \lambda_1\rfloor ,\dots,\lfloor n \lambda_K\rfloor\}$ from PDs $\textrm{P}_{V_{k}  |U S}=\mc{N}\big(\alpha_k(\theta_k-\alpha_c) S+U, P_k\big)$, applied once to each of interferences $\{\theta_1\mb{S},\dots, \theta_K\mb{S} \}$.  Send $\mb{X}=\mb{X}_C+\mb{X}_D$ with  $\mb{X}_C=\mb{U} - \alpha_c \mb{S}$ and $\mb{X}_D=[\mb{X}_{1} \dots \mb{X}_{K}]$,     where $\mb{X}_{k}=\mb{V}_{k}- \alpha_k(\theta_k-\alpha_c) \mb{S}-\mb{U}$. By substituting this in \eqref{capacity-new-lowerbound2}, it is not difficult to show that 
 \begin{align*} 
 &R_{-}^{\Theta}(P) = \max\limits_{(\alpha_c,\underline{\alpha})\in\mathbb{R}^{\|\Theta \|+1}}  \min\limits_{k\in\{1,\dots,\|\Theta \|\}} \Big\{ I(U^{(\alpha_c)};Y_k)- \nonumber\\ 
 &I(U^{(\alpha_c)};S) + \lambda_k \big[ I\big(V_k^{(\alpha_k)};Y_k|U^{(\alpha_c)}\big)-I\big(V_k^{(\alpha_k)};S|U^{(\alpha_c)}\big) \big]\Big\}.
 \label{rate-exp1}
 \end{align*} 

%


\section{Summary and Discussion}
We have investigated the compound state-dependent DMC with non-causal state information at the transmitter but not at the receiver. Some references \cite{Piantanida-shamai2009}, \cite{Moulin-Wang07}  and conjectures on the capacity of these channels  \cite{Mitran-Devroye-Tarokh2006} have lent support to the general belief that the natural extension \eqref{capacity-low} of the Gel'fand and Pinsker's capacity \cite{GF-1980}  to the compound setting case is indeed optimal. This paper shows that this is not in general the case. We found that the capacity of the general compound DMC can be strictly larger than the straightforward extension of the Gel'fand and Pinsker's capacity. We derived a new lower bound on the capacity  and showed that it is tight for the  compound channel with degraded components. It would be of interest to determine whether the result here \eqref{capacity-new-lowerbound1} is strictly better than the common rate result reported in \cite[eq. (45)]{Steinberg-Shamai-2005} and further explore the optimality of this result for channels with semi-deterministic and less noisy components.


The compound Gaussian Dirty-Paper channel that consists of an AWGN channel with an additive interference, where the input and the state signals are affected by fading coefficients whose realizations are unknown at the transmitter, was also considered. We derived lower and upper bounds on the capacity of this channel that are tight for some special cases.

\section*{Acknowledgment}
This research is supported by the FP7 Network of Excellence in Wireless COMmunications NEWCOM++.

\bibliographystyle{IEEEtran.bst}
\bibliography{biblio.bib}

\end{document}